\begin{document}
\title{Order Statistics Based List Decoding Techniques for Linear Binary Block
  Codes} \author{Saif E. A. Alnawayseh, and
  Pavel~Loskot,~\IEEEmembership{Member,~IEEE}%
  \thanks{The authors are with College of Engineering, Swansea University,
    Singleton Park, Swansea SA2 8PP, United Kingdom (email:
    \{491959,p.loskot\}@swan.ac.uk)}\thanks{Corresponding author: Pavel Loskot,
    phone: +44 1792 602619, fax: +44 1792 295676}\thanks{This work was
    presented in part at the \emph{IEEE Wireless Communications \& Signal
      Processing Conference (WCSP)}, Nanjing, China, Nov. 2009.}}
\markboth{IEEE Transactions on Information Theory}{Alnawayseh and Loskot: Order
  Statistics Based List Decoding}
\maketitle

\begin{abstract}
  The order statistics based list decoding techniques for linear binary block
  codes of small to medium block length are investigated. The construction of
  the list of the test error patterns is considered. The original order
  statistics decoding is generalized by assuming segmentation of the most
  reliable independent positions of the received bits. The segmentation is
  shown to overcome several drawbacks of the original order statistics
  decoding. The complexity of the order statistics based decoding is further
  reduced by assuming a partial ordering of the received bits in order to avoid
  the complex Gauss elimination. The probability of the test error patterns in
  the decoding list is derived. The bit error rate performance and the decoding
  complexity trade-off of the proposed decoding algorithms is studied by
  computer simulations. Numerical examples show that, in some cases, the
  proposed decoding schemes are superior to the original order statistics
  decoding in terms of both the bit error rate performance as well as the
  decoding complexity.
\end{abstract}

\begin{keywords}
  Decoding, fading, linear code, performance evaluation.
\end{keywords}
\newpage

\section{Introduction}

A major difficulty in employing forward error correction (FEC) coding is the
implementation complexity especially of the decoding at the receiver and the
associated decoding latency for long codewords. Correspondingly, the FEC coding
is often designed to trade-off the bit error rate (BER) with the decoding
complexity and latency. Many universal decoding algorithms have been proposed
for the decoding linear binary block codes \cite{Yagi05}. The decoding
algorithms in \cite{Fos95}--\cite{Val04} are based on the testing and
re-encoding of the information bits as initially considered by Dorsch in
\cite{Dorsch}. In particular, a list of the likely transmitted codewords is
generated using the reliabilities of the received bits, and then, the most
likely codeword is selected from this list. The list of the likely transmitted
codewords can be constructed from a set of the test error patterns. The test
error patterns can be predefined as in \cite{Fos95} and \cite{Gaz97} and in
this paper, predefined and optimized for the channel statistics as in
\cite{Kab07}, or defined adaptively for a particular received sequence as
suggested in \cite{Yagi06}. The complexity of the list decoding can be further
reduced by the skipping and stopping rules as shown, for example, in
\cite{Fos95} and \cite{Gaz97}.

Among numerous variants of the list decoding techniques, the order statistics
decoding (OSD) is well-known \cite{Fos95}, \cite{Gaz97}. The structural
properties of the FEC code are utilized to reduce the OSD complexity in
\cite{Fos96}. The achievable coding gain of the OSD is improved by considering
the multiple information sets in \cite{Fos02-1}. The decoding proposed in
\cite{Jin07-1} exploits randomly generated biases to present the decoder with
the multiple received soft-decision values. The sort and match decoding of
\cite{Val99} divides the received sequence into two disjoint segments. The list
decoding is then performed for each of the two segments independently, and the
two lists are combined using the sort and match algorithm to decide on the most
likely transmitted codeword. The box and match decoding strategy is developed
in \cite{Val04}. An alternative approach to the soft-decision decoding of
linear binary block codes relies on the sphere decoding techniques
\cite{Most09,Vikalo06}. For example, the input sphere decoder (ISD) discussed
in this paper can be considered to be a trivial sphere decoding algorithm.

In this paper, we investigate the OSD-based decoding strategies for linear
binary block codes. Our aim is to obtain low-complexity decoding schemes that
provide sufficiently large or valuable coding gains, and most importantly, that
are well-suited for implementation in communication systems with limited
hardware resources, e.g., at nodes of the wireless sensor network. We modify
the original OSD by considering the disjoint segments of the most reliable
independent positions (MRIPs). The segmentation of the MRIPs creates
flexibility that can be exploited to fine tune a trade-off between the BER
performance and the decoding complexity. Thus, the original OSD can be
considered to be a special case of the segmentation-based OSD having only one
segment corresponding to the MRIPs. Since the complexity of obtaining a row
echelon form of the generator matrix for every received codeword represents a
significant part of the overall decoding complexity, we examine a partial-order
statistics decoding (POSD) when only the systematic part of the received
codeword is ordered.

This paper is organized as follows. System model is described in Section
II. Construction of the list of test error patterns is investigated in Section
III. The list decoding algorithms are developed in Section IV. The performance
analysis is considered in Section V. Numerical examples to compare the BER
performance and the decoding complexity of the proposed decoding schemes are
presented in Section VI. Finally, conclusions are given in Section VII.

\section{System Model}

Consider transmission of codewords of a linear binary block code $\C$ over an
additive white Gaussian noise (AWGN) channel with Rayleigh fading. The code
$\C$, denoted as $(N,K,\dmin)$, has block length $N$, dimension $K$, and the
minimum Hamming distance between any two codewords $\dmin$. Binary codewords
$\cv\in\Z_2^N$ where $\Z_2=\{0,1\}$ are generated from a vector of information
bits $\uv\in\Z_2^K$ using the generator matrix $\Gv\in\Z_2^{K\times N}$, i.e.,
$\cv= \uv\Gv$, and all binary operations are considered over a Galois field
$\mathrm{GF}(2)$. If the code $\C$ is systematic, the generator matrix has the
form, $\Gv=[\I\ \Pv]$, where $\I$ is the $K\times K$ identity matrix, and
$\Pv\in\Z_2^{K\times(N-K)}$ is the matrix of parity checks. The codeword $\cv$
is mapped to a binary phase shift keying (BPSK) sequence $\xv\in\{+1,-1\}^N$
before transmission using a mapping, $x_i=\MP{c_i}=(-1)^{c_i}$, for
$i=1,2,\cdots,N$. Assuming bits $u_i$ and $u_j$, the mapping $\Mp$ has the
property,
\begin{equation}\label{eq:5}
  \MP{u_i\oplus u_j}= \MP{u_i}\MP{u_j}
\end{equation}
where $\oplus$ denotes the modulo $2$ addition. The encoded bit $c_i$ can be
recovered from the symbol $x_i$ using the inverse mapping, $c_i=
\MPi{x_i}=(1-x_i)/2$. For brevity, we also use the notation, $\xv=\MP{\cv}$ and
$\cv=\MPi{\xv}$, to denote the component-wise modulation mapping and
de-mapping, respectively. The code $\C$ is assumed to have equally probable
values of the encoded bits, i.e., the probability, $\Prob{c_i=0}=\Prob{c_i=1}
=1/2$, for $i=1,2,\cdots,N$. Consequently, all the codewords are transmitted
with the equal probability, i.e., $\Prob{\cv}=2^{-K}$ for $\forall \cv\in\C$.

The signal at the output of the matched filter at the receiver can be written
as,
\begin{equation*}
  y_i= h_i x_i+w_i
\end{equation*}
where the frequency non-selective channel fading coefficients $h_i$ as well as
the AWGN samples $w_i$ are mutually uncorrelated zero-mean circularly symmetric
complex Gaussian random variables. The variance of $h_i$ is unity, i.e.,
$\E{|h_i|^2}=1$ where $\E{\cdot}$ is expectation, and $|\cdot|$ denotes the
absolute value. The samples $w_i$ have the variance, $\E{|w_i|^2}= (R
\gammab)^{-1}$, where $R=K/N$ is the coding rate of $\C$, and $\gammab$ is the
signal-to-noise ratio (SNR) per transmitted encoded binary symbol. The
covariance, $\E{h_i h_j^\ast}=0$ for $i\neq j$, where $(\cdot)^\ast$ denotes
the complex conjugate, corresponds to the case of a fast fading channel with
ideal interleaving and deinterleaving. For a slowly block-fading channel, the
covariance, $\E{h_i h_j^\ast}=1$ for $\forall i,j=1,2,\cdots,N$, and the fading
coefficients are uncorrelated between transmissions of adjacent codewords.

In general, denote as $f(\cdot)$ the probability density function (PDF) of a
random variable. The reliability $r_i$ of the received signal $y_i$ corresponds
to a ratio of the conditional PDFs of $y_i$ \cite{Ben99}, i.e.,
\begin{equation*}
  \frac{f(y_i|x_i=+1,h_i)}{f(y_i|x_i=-1,h_i)}\propto \real{h_i^\ast y_i}=r_i
\end{equation*}
since the PDF $f(y_i|x_i,h_i)$ is conditionally Gaussian. Thus, the reliability
$r_i$ can be written as,
\begin{equation*}
  r_i= \real{h_i}\real{y_i}+ \imag{h_i}\imag{y_i}= |h_i|^2 x_i + |h_i|w_i.
\end{equation*}
The bit-by-bit quantized (i.e., hard) decisions are then defined as,
\begin{equation*}
  \ch_i=\MPi{\sign{r_i}}
\end{equation*}
where $\sign{\cdot}$ denotes the sign of a real number.

More importantly, even though the primary metric of our interest is the BER
performance of the code $\C$, it is mathematically more convenient to obtain
and analyze the list decoding algorithms assuming the probability of codeword
error. Thus, we assume that the list decoding with a given decoding complexity
obtained for the probability of codeword error will also have a good BER
performance. The maximum likelihood (ML) decoder minimizing the probability of
codeword error provides the decision $\cml$ on the most likely transmitted
codeword, i.e.,
\begin{eqnarray}
  \cml&=& \argmin_{\cv\in\C:\ \xv=\MP{\cv}} \norm{\yv -
    \hv\odot\xv}^2\nonumber\\ &=& \argmax_{\cv\in\C} \sum_{i=1}^N \real{y_i
    h_i^\ast x_i^\ast} \overset{\mbox{\footnotesize
      BPSK}}{=}\argmax_{\cv\in\C:\ \xv=\MP{\cv}}\ \rv\cdot\xv \label{eq:10}
\end{eqnarray}
where $\yv$, $\hv$, $\xv$, and $\rv$ denote the $N$-dimensional row vectors of
the received signals $y_i$, the channel coefficients $h_i$, the transmitted
symbols $x_i$, and the reliabilities $r_i$ within one codeword, respectively,
$\norm{\cdot}$ is the Euclidean norm of a vector, $\odot$ is the component-wise
(Hadamard) product of vectors, and the binary operator $\cdot$ is used to
denote the dot-product of vectors. The codewords $\cv\in\C$ used in
\eref{eq:10} to find the maximum or the minimum value of the ML metric are
often referred to as the test codewords. In the following subsection, we
investigate the soft-decision decoding algorithms with low implementation
complexity to replace the computationally demanding ML decoding \eref{eq:10}.

\subsection{List Decoding}

We investigate the list-based decoding algorithms. For simplicity, we assume
binary block codes that are linear and systematic \cite{Lin83}. We note that
whereas the extension of the list-based decoding algorithms to non-systematic
codes is straightforward, the list based decoding of non-linear codes is
complicated by the fact that the list of the test codewords is, in general,
dependent on the received sequence. The decoding (time) complexity $O$ of the
list decoding algorithms can be measured as the list size given by the number
of the test codewords that are examined in the decoding process. Thus, the ML
decoding \eref{eq:10} has the complexity, $O_{\mathrm{ML}}=2^K$, which is
prohibitive for larger values of $K$. Among the practical list-based decoding
algorithms with the acceptable decoding complexity, we investigate the order
statistics decoding (OSD) \cite{Fos95} based list decoding algorithms for
soft-decision decoding of linear binary block codes.

The OSD decoding resumes by reordering the received sequence of reliabilities
as,
\begin{equation}\label{eq:3}
  |\rt_1^\prime|\geq |\rt_2^\prime|\geq \cdots |\rt_N^\prime|
\end{equation}
where the tilde is used to denote the ordering. This ordering of the
reliabilities defines a permutation, $\lambdap$, i.e.,
\begin{equation*}
  \rvt^\prime= \lambdap \sbsb{\rv}= (\rt_1^\prime,\cdots,\rt_N^\prime).
\end{equation*}
The permutation $\lambdap$ corresponds to the generator matrix $\Gvt^\prime=
\lambdap\sbsb{\Gv}$ having the reordered columns. In order to obtain the most
reliable independent positions (MRIPs) for the first $K$ bits in the codeword,
additional swapping of the columns of $\Gvt^\prime$ may have to be used which
corresponds to the permutation $\lambdapp$, and the generator matrix
$\Gvt^\pprime=\lambdapp \sbsb{\Gvt^\prime}$. The matrix $\Gvt^\pprime$ can be
manipulated into a row (or a reduced row) echelon form using the Gauss (or the
Gauss-Jordan) elimination. To simplify the notation, let $\rvt$ and $\Gvt$
denote the reordered sequence of the reliabilities $\rv$ and the reordered
generator matrix $\Gvt$ in a row (or a reduced row) echelon form, respectively,
after employing the permutations $\lambdap$ and $\lambdapp$, to decode the
received sequence $\yv$. Thus, for $i\geq j$, the reordered sequence $\rvt$ has
elements, $|\rt_i|\geq |\rt_j|$, for $i,j=1,\cdots,K$, and for
$i,j=K+1,\cdots,N$.

The complexity of the ML decoding \eref{eq:10} of the received sequence $\yv$
can be reduced by assuming a list of the $L$ test codewords, so that $L\ll
2^K$. Denote such a list of the test codewords of cardinality $L$ generated by
the matrix $\Gvt$ as, $\ES_{L}=\{\ev_0,\ev_2,\cdots,\ev_{L_-1}\}$, and let
$\ev_0=\Zs$ be the all-zero codeword. Then, the list decoding of $\yv$ is
defined as,
\begin{equation}\label{eq:20}
  \cvh= \argmax_{\ev\in\ES_{L}:\ \xv=\MP{\cvh_0\oplus \ev}} \rvt\cdot\xv
\end{equation}
where the systematic part of the codeword $\cvh_0$ is given by the
hard-decision decoded bits at the MRIPs. The decoding step to obtain the
decision $\cvh_0$ is referred to as the $0$-th order OSD reprocessing in
\cite{Fos95}. In addition, due to linearity of $\C$, we have that
$(\cv_0\oplus\ev)\in\C$, and thus, the test codewords $\ev\in\ES_L$ can be also
referred to as the test error patterns in the decoding \eref{eq:20}. Using the
property \eref{eq:5}, we can rewrite the decoding \eref{eq:20} as,
\begin{equation}\label{eq:30}
  \cvh= \argmax_{\ev\in\ES_{L}}\ \rvt\cdot\xvh_0\cdot\MP{\ev}=
  \argmax_{\ev\in\ES_{L}}\ \rvt_0\cdot\MP{\ev}
\end{equation}
where we denoted $\xvh_0=\MP{\cvh_0}$ and $\rvt_0=\rvt\odot\xvh_0$. The system
model employing the list decoding \eref{eq:30} is illustrated in
\fref{Fig:0}. More importantly, as indicated in \fref{Fig:0}, the system model
can be represented as an equivalent channel with the binary vector input $\cv$
and the vector soft-output $\rvt_0$.

\section{List Selection}

The selection of the test error patterns $\ev$ to the list $\ES_{L}$ as well as
the list size $L$ have a dominant effect upon the probability of incorrect
codeword decision by the list decoding. Denote such probability of codeword
error as $\Pe$, and let $\ctx$ be the transmitted codeword.  In \cite{Most09},
the probability $\Pe$ is expanded as,
\begin{eqnarray*}
  \Pe&=& \Prob{\cvh\neq \ctx | \cml\neq \ctx}\Prob{\cml\neq \ctx}\\
  &&+ \Prob{\cvh\neq \ctx | \cml=\ctx}\Prob{\cml=\ctx}
\end{eqnarray*}
where the decision $\cvh$ is obtained by the decoding \eref{eq:30}, and the
condition, $\cml\neq \ctx$, is true provided that the vectors $\cml$ and $\ctx$
differ in at least one component, i.e., $\cml=\ctx$ if and only if all the
components of the vectors are equal. Since, for any list $\ES_{L}$, the
probability, $\Prob{\cvh\neq \ctx | \cml\neq \ctx}=1$, and usually, the
probability, $\Prob{\cml=\ctx}$ is close to $1$, $\Pe$ can be tightly
upper-bounded as,
\begin{equation}\label{eq:40}
  \Pe\leq \Prob{\cml\neq \ctx}+\Prob{\cvh\neq \ctx | \cml=\ctx}.
\end{equation}
The first term on the right hand side of \eref{eq:40} is the codeword error
probability of the ML decoding, and the second term is the conditional codeword
error probability of the list decoding. The probability, $\Prob{\cvh\neq \ctx |
  \cml=\ctx}$, is decreasing with the list size. In the limit of the maximum
list size when the list decoding becomes the ML decoding, the bound
\eref{eq:40} becomes, $\Pe= \Prob{\cml\neq \ctx}$. The bound \eref{eq:40} is
particularly useful to analyze the performance of the list decoding
\eref{eq:30}. However, in order to construct the list of the test error
patterns, we consider the following expansion of the probability $\Pe$, i.e.,
\begin{eqnarray*}
  \Pe&=& \Prob{\cvh\neq \ctx| (\ctx\oplus\cvh_0)\in\ES_{L}}
  \Prob{(\ctx\oplus\cvh_0)\in\ES_{L}}\\
  &&+  \Prob{\cvh\neq \ctx| (\ctx\oplus\cvh_0)\not\in\ES_{L}}
  \Prob{(\ctx\oplus\cvh_0)\not\in\ES_{L}}\\
  &=& 1- \underbrace{\Prob{ \cvh= \ctx | (\ctx\oplus\cvh_0)\in\ES_{L}}}_\PI
  \underbrace{\Prob{(\ctx\oplus\cvh_0)\in\ES_{L}}}_\PII.
\end{eqnarray*}
Using \eref{eq:20} and \eref{eq:30}, the probability $\PI$ that the list
decoding \eref{eq:30} selects the transmitted codeword provided that such
codeword is in the list (more precisely, provided that the error pattern
$\ctx\oplus\cvh_0$ is in the list) can be expressed as,
\begin{equation}\label{eq:50}
  \PI= \Prob{ \rvt\cdot\MP{\ctx\oplus\cvh_0}\geq \rvt_0\cdot\MP{\ev},
    \forall\ev\in\ES_{L}}.
\end{equation}
The probability \eref{eq:50} decreases with the list size, and, in the limit of
the maximum list size $L=2^K$, $\PI=1-\Pe$. On the other hand, the probability
$\PII$ that the transmitted codeword is in the decoding list increases with the
list size, and $\PII=1$, for $L=2^K$.

Since the coding $\C$ and the communication channel are linear, then, without
loss of generality, we can assume that the all-zero codeword, $\ctx=\Zs$, is
transmitted. Consequently, given the list decoding complexity $L$, the optimum
list $\ES_L^\ast$ minimizing the probability $\Pe$ is constructed as,
\begin{equation}\label{eq:60}
  \ES_L^\ast= \argmax_{\ES:\ |\ES|=L} \Prob{\cvh=\Zs|\cvh_0\in\ES}
  \Prob{\cvh_0\in\ES}
\end{equation}
where $|\ES|$ is the cardinality of the test list $\ES$, and the hard-decision
codeword $\cvh_0\in\C$ represents the error pattern observed at the receiver
after transmission of the codeword $\ctx=\Zs$. For a given list of the error
patterns $\ES$ in \eref{eq:60}, and for the system model in Section II with
asymptotically large SNR, the probability $\PI= \Prob{\cvh=\Zs|\cvh_0\in\ES}$
is dominated by the error events corresponding to the error patterns with the
smallest Hamming distances. Since the error patterns are also codewords of
$\C$, the smallest Hamming distance between any two error patterns in the list
$\ES$ is at least $\dmin$. Assuming that the search in \eref{eq:60} is
constrained to the lists $\ES$ having the minimum Hamming distance between any
two error patterns given by $\dmin$, the probability $\PI$ is approximately
constant for all the lists $\ES$, and we can consider the suboptimum list
construction,
\begin{equation}\label{eq:61}
  \ES_L= \argmax_{\ES:\ |\ES|=L} \Prob{\cvh_0\in\ES}.
\end{equation}
The list construction \eref{eq:61} is recursive in its nature, since the list
$\ES$ maximizing \eref{eq:61} consists of the $L$ most probable error
patterns. However, in order to achieve a small probability of decoding error
$\Pe$ and approach the probability of decoding error, $\Prob{\cml\neq \ctx}$,
of the ML decoding, the list size $L$ must be large. We can obtain a practical
list construction by assuming the $L$ sufficiently probable error patterns
rather than assuming the $L$ most likely error patterns. We restate Theorem 1
and Theorem 2 in \cite{Fos95} to obtain the likely error patterns and to define
the practical list decoding algorithms.

Denote as $\Prm(i_{1},i_{2},\cdots,i_{n})$ the $n$-th order joint probability
of bit errors at bit positions $1\leq i_{1}< i_{2}< \cdots <i_{n}\leq N$ in the
received codeword after the ordering $\lambdap$ and $\lambdapp$ and before the
decoding. Since the test error pattern $\ev$ is a codeword of $\C$, the
probability $\Prm(i_{1},i_{2},\cdots,i_{n})$, for $i_n\leq K$, is equal to the
probability $\Prob{\ev=\cvh_0}$ assuming that the $n$ bit errors occurred during
the transmission corresponding to the positions (after the ordering)
$i_1,i_2,\cdots,i_n$. We have the following lemma.
\begin{lemma}\label{lm:1}
  For any bit positions $\Ic_1\subseteq\Ic\subseteq \{1,2,\cdots,N\}$,
  \begin{equation*}
    \Prm(\Ic)\leq \Prm(\Ic_1).
  \end{equation*}
\end{lemma}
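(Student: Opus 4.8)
The plan is to read $\Prm(\Ic)$ as the probability of a \emph{joint} bit-error event and then to conclude by the monotonicity of the probability measure. First I would fix the underlying sample space (the channel fading and noise) together with the induced ordering permutations $\lambdap$ and $\lambdapp$, and for each index $i$ let $A_i$ denote the event that a bit error is observed at the reordered position $i$ of the received codeword. With this notation the definition of the $n$-th order joint bit-error probability reads $\Prm(\Ic)= \Prob{\bigcap_{i\in\Ic} A_i}$, and analogously $\Prm(\Ic_1)= \Prob{\bigcap_{i\in\Ic_1} A_i}$.

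Next I would exploit the hypothesis $\Ic_1\subseteq\Ic$. Intersecting the events $A_i$ over a larger index set can only shrink the resulting event, so that $\bigcap_{i\in\Ic} A_i \subseteq \bigcap_{i\in\Ic_1} A_i$: every channel outcome that produces errors at all positions of $\Ic$ in particular produces errors at all positions of the smaller set $\Ic_1$. Applying monotonicity of probability, namely $\Prob{B}\leq\Prob{B'}$ whenever $B\subseteq B'$, then gives
\begin{equation*}
  \Prm(\Ic)= \Prob{\bigcap_{i\in\Ic} A_i}\leq \Prob{\bigcap_{i\in\Ic_1} A_i}= \Prm(\Ic_1),
\end{equation*}
which is exactly the assertion.

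I do not expect any genuine analytical obstacle here, since the statement is a one-step consequence of monotonicity. The only point requiring care is the correct identification of $\Prm(\cdot)$ as the probability of the intersection of the per-position error events (the joint event), rather than as a marginal or a union; once the events $A_i$ are introduced and the inclusion of the corresponding joint events is observed, the inequality is immediate and, notably, holds for arbitrary error statistics and is therefore independent of the particular channel model and of the specific ordering $\lambdap$, $\lambdapp$ used.
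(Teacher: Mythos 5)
Your proof is correct and is essentially the paper's argument in a different guise: the paper factors $\Prm(\Ic)=\Prm(\Ic\backslash\Ic_1|\Ic_1)\Prm(\Ic_1)$ and bounds the conditional factor by one, which is exactly the monotonicity of probability on nested intersection events that you invoke directly. Your phrasing via $\bigcap_{i\in\Ic}A_i\subseteq\bigcap_{i\in\Ic_1}A_i$ even sidesteps the degenerate case $\Prm(\Ic_1)=0$, but the substance is the same.
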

\begin{proof}
  The lemma is proved by noting that $\Prm(\Ic)=\Prm(\Ic_1,\Ic\backslash\Ic_1)
  = \Prm(\Ic\backslash\Ic_1| \Ic_1)\Prm(\Ic_1)\leq \min\{\Prm(\Ic_1),
  \Prm(\Ic\backslash\Ic_1| \Ic_1)\}\leq \Prm(\Ic_1)$ where $\Ic\backslash
  \Ic_1$ denotes the difference of the two sets.
\end{proof}
Using \lmref{lm:1}, we can show, for example, that, $\Prm(i,j)\leq\Prm(i)$ and
$\Prm(i,j)\leq \Prm(j)$. We can now restate Theorem 1 and Theorem 2 in
\cite{Fos95} as follows.
\begin{theorem}\label{th:1}
  Assume bit positions $1\leq i<j<k\leq N$, and let the corresponding
  reliabilities be $|\rt_i|\geq |\rt_j|\geq |\rt_k|$. Then, the bit error
  probabilities,
  \begin{eqnarray*}
    \Prm(i)&\leq& \Prm(j)\\
    \Prm(i,j)&\leq& \Prm(i,k).
  \end{eqnarray*}
\end{theorem}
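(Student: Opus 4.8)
The plan is to reduce both inequalities to a single monotonicity fact: the probability that a given received position is in error, conditioned on the magnitude of its reliability, is a decreasing function of that magnitude. Throughout I would work under the standing assumption $\ctx=\Zs$, so that $x_i=+1$ for every $i$ and a bit error at a position is exactly the event that its reliability is negative.

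First I would isolate the per-position quantity. For a generic position with reliability $r$, write $f_+$ for the density of $r$ given $x=+1$; by the symmetry of the channel (flipping $x$ negates the useful term while the zero-mean Gaussian noise is sign-symmetric) one has $f_-(r)=f_+(-r)$, and hence the conditional error probability is $p(\rho)=\Prob{r<0\mid |r|=\rho}=f_+(-\rho)/\bigl(f_+(-\rho)+f_+(\rho)\bigr)$. The key step is to show $p(\rho)$ is strictly decreasing in $\rho$. I would establish this from the fact that $r=\real{h^\ast y}$ is proportional to the bit log-likelihood ratio: conditioned on the fading with $|h|=a$, the reliability is Gaussian with mean $\pm a^2$ and variance proportional to $a^2$, so the conditional LLR equals a constant multiple of $r$ that is \emph{independent of} $a$. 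Because the conditional LLR does not depend on $a$, the identity $f_+(\rho)/f_+(-\rho)=e^{c\rho}$ survives marginalization over the Rayleigh-distributed $|h|$, with $c>0$ fixed by the noise variance $(R\gammab)^{-1}$. Hence $p(\rho)=(1+e^{c\rho})^{-1}$, which is manifestly decreasing. I expect this to be the main obstacle: for a general channel this is a monotone-likelihood-ratio condition, and here one must check that perfect CSI together with Gaussian noise keeps the LLR linear in $r$ even after averaging over the fading.

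Next I would transfer $p$ to the ordered sequence. Since the positions are mutually independent (the ideally interleaved, memoryless channel) and the reordering $\lambdap$ depends only on the magnitudes $|\rt_1|,\dots,|\rt_N|$, conditioning on the entire vector of ordered magnitudes leaves the error events at distinct ordered positions conditionally independent, each with conditional error probability $p(|\rt_i|)$. Taking expectations over the magnitudes then gives $\Prm(i)=\E{p(|\rt_i|)}$ and $\Prm(i,j)=\E{p(|\rt_i|)\,p(|\rt_j|)}$, and likewise for the index pair $(i,k)$.

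Finally I would conclude by a pointwise comparison. By construction of the ordering, $|\rt_i|\geq|\rt_j|\geq|\rt_k|$ holds on every realization, so monotonicity of $p$ gives $p(|\rt_i|)\leq p(|\rt_j|)$ pointwise, and taking expectations yields $\Prm(i)\leq\Prm(j)$. For the joint probabilities, $|\rt_j|\geq|\rt_k|$ gives $p(|\rt_j|)\leq p(|\rt_k|)$ pointwise, and multiplying through by the nonnegative factor $p(|\rt_i|)$ before taking expectations yields $\Prm(i,j)\leq\Prm(i,k)$. Both inequalities thus follow from monotonicity together with nonnegativity, with no further structure required.
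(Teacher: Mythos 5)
Your proof is correct, but it reaches the result by a genuinely different route from the paper's. The paper argues directly on the joint conditional PDF of the ordered reliabilities: it invokes the sign-flip inequalities $f(\rt_i,\rt_j)<f(-\rt_i,\rt_j)$ for $\rt_i>0$ and $f(-\rt_i,\rt_j)<f(-\rt_i,-\rt_j)$ for $\rt_j>0$, writes $\Prm(i)$ and $\Prm(j)$ as integrals over matched regions, and compares integrands; the second inequality is then dispatched by conditioning on an error at position $i$ and ``following the steps in the first part.'' You instead route everything through the scalar function $p(\rho)=\Prob{\mathrm{error}\ |\ |r|=\rho}=(1+e^{c\rho})^{-1}$, prove it is decreasing by observing that the LLR is linear in $r$ with a slope independent of $|h|$ (so the likelihood-ratio identity survives marginalization over the fading), and use conditional independence of the error indicators given the vector of magnitudes to write $\Prm(i)=\E{p(|\rt_i|)}$ and $\Prm(i,j)=\E{p(|\rt_i|)\,p(|\rt_j|)}$, after which both inequalities are pointwise comparisons under the almost-sure ordering $|\rt_i|\geq|\rt_j|\geq|\rt_k|$. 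This buys two things. First, the second inequality genuinely becomes a one-liner, where the paper only sketches it. Second, you make explicit the ingredient the paper's region-pairing actually relies on: on the region $\rt_i>0>\rt_j>-\rt_i$ one must show $f(\rt_i,\rt_j)\leq f(-\rt_i,-\rt_j)$, and the paper's two displayed inequalities only bound \emph{both} sides from above by the common quantity $f(-\rt_i,\rt_j)$; what closes that comparison is precisely your monotonicity of $p$ (equivalently, that $f(-\rho)/f(\rho)=e^{c\rho}$ is increasing in $\rho$), not merely the symmetry fact $p(\rho)<1/2$. The one caveat in your write-up is the appeal to unconditional independence of the positions, which holds for the ideally interleaved channel; for the block-fading case you should condition on $h$ first, and the argument still closes only because your slope $c$ does not depend on $|h|$ --- a point worth stating explicitly.
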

\begin{proof}
  Without loss of generality, we assume that the symbols $x_i=-1$, $x_j=-1$ and
  $x_k=-1$ have been transmitted. Then, before the decoding, the received bits
  would be decided erroneously if the reliabilities $\rt_i>0$, $\rt_j>0$, and
  $\rt_k>0$. Conditioned on the transmitted symbols, let $f(\cdot)$ denote the
  conditional PDF of the ordered reliabilities $\rt_i$, $\rt_j$ and $\rt_k$.

  Consider first the inequality $\Prm(i)\leq \Prm(j)$. Since, for $\rt_i>0$,
  $f(\rt_i)<f(-\rt_i)$, using $f(\rt_i,\rt_j)=f(\rt_i|\rt_j)f(\rt_j)$, we can
  show that, for $\rt_i>0$ and any $\rt_j$, $f(\rt_i,\rt_j)<
  f(-\rt_i,\rt_j)$. Similarly, using $f(-\rt_i,\rt_j)=
  f(\rt_j|-\rt_i)f(-\rt_i)$, we can show that, for $\rt_j>0$ and any $\rt_i$,
  $f(-\rt_i,\rt_j)< f(-\rt_i,-\rt_j)$. Then, the probability of error for bits
  $i$ and $j$, respectively, is,
  \begin{eqnarray*}
    \Prm(i)&=& \int_0^\infty \int_{-\rt_i}^{\rt_i} f(\rt_i,\rt_j)
    \df\rt_j\df\rt_i\\ &=& \int_0^\infty \int_0^{\rt_i} f(\rt_i,\rt_j) \df\rt_j
    \df\rt_i+ \int_0^\infty \int_{-\rt_i}^0 f(\rt_i,\rt_j) \df\rt_j\df\rt_i\\
    \Prm(j)&=& \int_{-\infty}^\infty \int_0^{|\rt_i|} f(\rt_i,\rt_j) \df\rt_j
    \df\rt_i \\ &=& \int_0^\infty \int_0^{\rt_i} f(\rt_i,\rt_j) \df\rt_j\df\rt_i+
    \int_0^\infty\int_{-\rt_i}^0 f(-\rt_i,-\rt_j) \df\rt_j\df\rt_i
  \end{eqnarray*}
  and thus, $\Prm(i)\leq \Prm(j)$.

  The second inequality, $\Prm(i,j)\leq \Prm(i,k)$, can be proved by assuming
  conditioning, $\Prm(i,j)=\Prm(j|i)\Prm(i)$, $\Prm(i,k)=\Prm(k|i)\Prm(i)$, and
  $f(\rt_i,\rt_j,\rt_k)= f(\rt_j,\rt_k|\rt_i) f(\rt_i)$, and by using
  inequality $\Prm(i)\leq \Prm(j)$, and following the steps in the first part
  of the theorem proof.
\end{proof}

\section{List Decoding Algorithms}

Using Theorem 1 and Theorem 2 in \cite{Fos95}, the original OSD assumes the
following list of error patterns,
\begin{equation}\label{eq:15}
  \ES_L=\{\ev\Gv:\ 0\leq\wH{\ev}\leq I,\ \ev\in\Z_2^K\}
\end{equation}
where $I$ is the so-called reprocessing order of the OSD, and $\wH{\ev}$ is the
Hamming weight of the vector $\ev$. The list \eref{eq:15} uses a
$K$-dimensional sphere of radius $I$ defined about the origin
$\Zs=(0,\cdots,0)$ in $\Z_2^K$. The decoding complexity for the list
\eref{eq:15} is $L=\sum_{l=0}^I \binom{K}{l}$ where $l$ is referred to as the
phase of the order $I$ reprocessing in \cite{Fos95}. Assuming an AWGN channel,
the recommended reprocessing order is $I=\lceil \dmin/4\rceil\ll K$ where
$\lceil\cdot\rceil$ is the ceiling function. Since the OSD algorithm may become
too complex for larger values of $I$ and $K$, a stopping criterion for
searching the list $\ES_L$ was developed in \cite{Fos96}.

We can identify the following inefficiencies of the original OSD
algorithm. First, provided that no stopping nor skipping rules for searching
the list of the test error patterns are used, once the MRIPs are found, the
ordering of bits within the MRIPs according to their reliabilities becomes
irrelevant.  Second, whereas the BER performance of the OSD is modestly
improving with the reprocessing order $I$, the complexity of the OSD increases
rapidly with $I$ \cite{Fos96}. Thus, for given $K$, the maximum value of $I$ is
limited by the allowable OSD complexity to achieve a certain target BER
performance. We can address the inefficiencies of the original OSD by more
carefully exploiting the properties of the probability of bit errors given by
\lmref{lm:1} and \thref{th:1}. Hence, our aim is to construct a well-defined
list of the test error patterns \emph{without} considering the stopping and the
skipping criteria to search this list.

Recall that the error patterns can be uniquely specified by bits in the MRIPs
whereas the bits of the error patterns outside the MRIPs are obtained using the
parity check matrix. In order to design a list of the test error patterns
independently of the particular generator matrix of the code as well as
independently of the particular received sequence, we consider only the bit
errors within the MRIPs. Thus, we can assume that, for all error patterns, the
bit errors outside the MRIPs affect the value of the metric in \eref{eq:30}
equally. More importantly, in order to improve the list decoding complexity and
the BER performance trade-off, we consider partitioning of the MRIPs into
disjoint segments. This decoding strategy employing the segments of the MRIPs
is investigated next.

\subsection{Segmentation-Based OSD}

Assuming $Q$ disjoint segments of the MRIPs, the error pattern $\ev$
corresponding to the $K$ MRIPs can be expressed as a concatenation of the $Q$
error patterns $\ev^{(q)}$ of length $K_q$ bits, $q=1,\cdots,Q$, i.e.,
\begin{equation*}
  \ev=(\ev^{(1)},\cdots,\ev^{(Q)})\in\Z_2^K
\end{equation*}
so that $\sum_{q=1}^Q K_q= K$, and $\wH{\ev}=\wH{\ev^{(1)}}+ \cdots+
\wH{\ev^{(Q)}}$. As indicated by \lmref{lm:1} and \thref{th:1}, more likely
error patterns have smaller Hamming weights and they correct the bit positions
with smaller reliabilities. In addition, the decoding complexity given by the
total number of error patterns in the list should grow linearly with the number
of segments $Q$. Consequently, for a small number of segments $Q$, it is
expected that a good decoding strategy is to decode each segment independently,
and then, the final decision is obtained by selecting the best error
(correcting) pattern from each of the segments decodings. In this paper, we
refine this strategy for $Q=2$ segments as a generalization of the conventional
OSD having only $Q=1$ segment.

Assuming that the two segments of the MRIPs are decoded independently, the list
of error patterns can be written as,
\begin{equation}\label{eq:18}
  \ES_L=\ESi\cup \ESii
\end{equation}
where $\ESi$ and $\ESii$ are the lists of error patterns corresponding to the
list decoding of the first segment and of the second segment, respectively, and
$L=L_1+L_2$. Obviously, fewer errors, and thus, fewer error patterns can be
assumed in the shorter segments with larger reliabilities of the received
bits. Similarly to the conventional OSD having one segment, for both MRIPs
segments, we assume all the error patterns up to the maximum Hamming weight
$I_q$, $q=1,2$. Then, the lists of error patterns in \eref{eq:18} can be
defined as,
\begin{equation}\label{eq:16}
  \begin{array}{rcl}
  \ESi&=&\{(\ev,\Zs)\Gv:\ 0\leq\wH{\ev}\leq I_1,\ \ev\in\Z_2^{K_1}\} \\
  \ESii&=&\{(\Zs,\ev)\Gv:\ 0\leq\wH{\ev}\leq I_2,\ \ev\in\Z_2^{K_2}\}.
  \end{array}
\end{equation}
The decoding complexity of the segmentation-based OSD with the lists of error
patterns defined in \eref{eq:16} is,
\begin{equation*}
  L=\sum_{l=0}^{I_1} \binom{K_1}{l}+ \sum_{l=0}^{I_2} \binom{K_2}{l}
\end{equation*}
where $K=K_1+K_2$, and we assume $I_1\ll K_1$ and $I_2\ll K_2$. 

Recall that the original OSD, denoted as $\OSD(I)$, has one segment of length
$K$ bits, and that the maximum number of bit errors assumed in this segment is
$I$.  The segmentation-based OSD is denoted as $\OSD(I_1,I_2)$, and it is
parametrized by the segment length $K_1,$ and $K_2$, and the maximum number of
errors $I_1$ and $I_2$, respectively. The segment sizes $K_1$ and $K_2$ are
chosen empirically to minimize the BER for a given decoding complexity and for
a class of codes under consideration. In particular, for systematic block codes
of block length $N<128$ and of rate $R\geq 1/2$, we found that the recommended
length of the first segment is,
\begin{equation*}
  K_1\approx 0.35 K
\end{equation*}
so that the second segment length is $K_2=K-K_1$. The maximum number of bit
errors $I_1$ and $I_2$ in the two segments are selected to fine-tune the BER
performance and the decoding complexity trade-off. For instance, we can obtain
the list decoding schemes having the BER performance as well as the decoding
complexity between those corresponding to the original decoding schemes
$\OSD(I)$ and $\OSD(I+1)$.

Finally, we note that it is straightforward to develop the skipping criteria
for efficient searching of the list of error patterns in the OSD-based decoding
schemes. In particular, one can consider the Hamming distances for one or more
segments of the MRIPs between the received hard decisions (before the decoding)
and the temporary decisions obtained using the test error patterns from the
list. If any or all of the Hamming distances are above given thresholds, the
test error pattern can be discarded without re-encoding and calculating the
corresponding Euclidean distance. For the $Q=2$ segments OSD, our empirical
results indicate that the thresholds for the first and the second segments
should be $0.35\,\dmin$ and $\dmin$, respectively.

\subsection{Partial-Order Statistics Decoding}

The Gauss (or the Gauss-Jordan) elimination employed in the OSD-based decoding
algorithms represents a significant portion of the overall implementation
complexity.  A new row (or a reduced row) echelon form of the generator matrix
must be obtained after every permutation $\lambdapp$ until the MRIPs are
found. Hence, we can devise a partial-order statistics decoding (POSD) that
completely avoids the Gauss elimination, and thus, it further reduces the
decoding complexity of the OSD-based decoding. The main idea of the POSD is to
order only the first $K$ received bits according to their reliabilities, so
that the generator matrix remains in its systematic form. The ordering of the
first $K$ received bits in the descending order can improve the coding gain of
the segmentation-based OSD. Assuming $Q=2$ segments, we use the notation
$\POSD(I_1,I_2)$. The parameters $K_1$, $K_2$, $I_1$ and $I_2$ of
$\POSD(I_1,I_2)$ can be optimized similarly as in the case of $\OSD(I_1,I_2)$
to fine-tune the BER performance versus the implementation complexity. On the
other hand, we will show in Section V that the partial ordering (i.e., the
ordering of the first $K$ out of $N$ received bits) is irrelevant for the OSD
decoding having one segment of the MRIPs and using the list of error patterns
\eref{eq:15}. In this case, the $\POSD(I)$ decoding can be referred to as the
input-sphere decoding $\ISD(I)$.

\subsection{Implementation Complexity}

We compare the number of binary operations (BOPS) and the number of floating
point operations (FLOPS) required to execute the decoding algorithms proposed
in this paper. Assuming a $(N,K,\dmin)$ code, the complexity of the OSD and the
POSD are given in \tbref{Tab:1} and \tbref{Tab:2}. The implementation
complexity expressions in \tbref{Tab:1} for $\OSD(I)$ are from the reference
\cite{Fos95}. For example, the OSD decoding of the BCH code $(128,64,22)$
requires at least $1152$ FLOPS and $528448$ BOPS to find the MRIPs and to
obtain the corresponding equivalent generator matrix in a row echelon form. All
this complexity can be completely avoided by assuming the partial ordering in
the POSD decoding. The number of the test error patterns is $L=2080$ for
$\OSD(2)$, and $L=1177$ for $\OSD(2,2)$ with $K_1=21$ and $K_2=43$ whereas the
coding gain of $\OSD(2)$ can be only slightly better than the coding gain of
$\OSD(2,2)$; see, for example, \fref{Fig:1}. Hence, the overall complexity of
the OSD-based schemes can be substantially reduced by avoiding the Gauss
(Gauss-Jordan) elimination.

\section{Performance Analysis}

Recall that we assume a memoryless communication channel as described in
Section II. We derive the probability $\Prob{\cvh_0\in\ES_L}$ in \eref{eq:61}
that the error pattern $\cvh_0$ observed at the receiver after transmission of
the codeword $\ctx=\Zs$ is an element of the chosen decoding list $\ES_L$. The
derivation relies on the following generalization of Lemma 3 in \cite{Fos95}.
\begin{lemma}\label{lm:2}
  For any ordering of the $N$ received bits, consider the $I$ bit positions
  $\Ic\subseteq \{1,2,\cdots,N\}$, and the $\binom{I}{I_1}$ subsets
  $\Ic_1\subseteq \Ic$ of $I_1\leq I\leq N$ bit positions. Then, the total
  probability of the $I_1$ bit errors within the $I$ bits can be calculated as,
  \begin{equation*}
    \sum_{\Ic_1:\ |\Ic_1|=I_1} \Prm(\Ic_1)= \binom{I}{I_1}\, p_0^{I_1}
  \end{equation*}
  where $p_0$ is the probability of bit error corresponding to the bit
  positions $\Ic$.
\end{lemma}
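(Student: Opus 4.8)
The plan is to read the left-hand side as a binomial (factorial) moment of the total number of bit errors falling inside $\Ic$, and then to exploit the memoryless nature of the channel of Section~II. First I would let $W$ denote the random number of the $I$ positions in $\Ic$ at which the received bit is in error. For any realization, the number of $I_1$-element subsets $\Ic_1\subseteq\Ic$ \emph{all} of whose positions are simultaneously in error is exactly $\binom{W}{I_1}$, whereas $\Prm(\Ic_1)$ is precisely the probability that every position of $\Ic_1$ is in error. Summing this counting identity over the $\binom{I}{I_1}$ subsets and taking expectations therefore collapses the sum to a single moment,
\begin{equation*}
  \sum_{\Ic_1:\ |\Ic_1|=I_1}\Prm(\Ic_1)=\E{\binom{W}{I_1}}.
\end{equation*}

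Next I would invoke the memoryless assumption. Independence of the noise samples (and of the fading under ideal interleaving) renders the bit-error events at distinct positions mutually independent, and, by the hypothesis attached to the statement, every position grouped into $\Ic$ carries the common bit-error probability $p_0$. Hence $W$ is a sum of $I$ independent Bernoulli random variables each with success probability $p_0$, i.e.\ $W$ is binomially distributed with parameters $I$ and $p_0$, and the standard binomial factorial-moment identity $\E{\binom{W}{I_1}}=\binom{I}{I_1}p_0^{I_1}$ delivers the result. Equivalently, and more directly, independence alone gives $\Prm(\Ic_1)=p_0^{I_1}$ for each of the $\binom{I}{I_1}$ subsets, and summing the equal terms yields the right-hand side.

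The only nonroutine point—and the reason the identity holds \emph{for any ordering}—is to argue that the reorderings $\lambdap$ and $\lambdapp$ do not disturb the structure used above. Here I would stress that the right-hand side depends solely on the single marginal probability $p_0$ associated with $\Ic$ and on the count $I$, and not on the labels the ordering assigns to those positions; because the summation ranges over \emph{all} $I_1$-subsets of $\Ic$, it is symmetric in the positions of $\Ic$ and hence invariant under any relabelling produced by the ordering. The care required is thus conceptual rather than computational: one must confirm that the positions collected in $\Ic$ are statistically equivalent with the common error probability $p_0$ (so that $W$ is genuinely binomial and the per-subset terms are all $p_0^{I_1}$), after which the two combinatorial identities above finish the proof.
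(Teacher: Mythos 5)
Your combinatorial skeleton is sound: the identity $\sum_{\Ic_1:\ |\Ic_1|=I_1}\Prm(\Ic_1)=\E{\binom{W}{I_1}}$, with $W$ the number of errors falling in $\Ic$, is a correct reading of the left-hand side, and your closing observation that the sum is symmetric in the positions of $\Ic$ is the same starting point as the paper's own proof. The gap is in the probabilistic step that is supposed to make $W$ binomial. You assert that the memoryless channel renders the bit-error events at distinct positions mutually independent and that every position in $\Ic$ carries the common error probability $p_0$ ``by the hypothesis attached to the statement.'' Both claims fail for the positions the lemma is actually about: $\Prm$ is defined on positions \emph{after} the ordering $\lambdap,\lambdapp$, and \thref{th:1} shows precisely that the marginal error probabilities at distinct ordered positions differ ($\Prm(i)\leq\Prm(j)$ when $|\rt_i|\geq|\rt_j|$), while the joint probabilities do not factor. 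Consequently your ``more direct'' route --- $\Prm(\Ic_1)=p_0^{I_1}$ for \emph{each} subset, then sum equal terms --- is false term by term: already for $I_1=1$ the summands $\Prm(i)$, $i\in\Ic$, are unequal, and only their \emph{sum} equals $I\,p_0$. The common probability $p_0$ is not a hypothesis to be invoked; it is the aggregate quantity the lemma defines.

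What is missing is the bridge from symmetry to binomiality, which is where the paper does its (admittedly terse) work: because \emph{all} $I_1$-subsets of $\Ic$ are summed, the quantity depends only on the unordered collection of reliability magnitudes that land in $\Ic$; conditioned on that collection (equivalently, on the bounding order statistics of the segment), those magnitudes are exchangeable draws from the parent distribution restricted to the segment's range, and the error events are then \emph{conditionally} i.i.d.\ with a common parameter $p_0$ --- exactly the qualifier recorded in \clref{cl:1} and the reason Section V evaluates $\Prm_0$, $\Prm_1$, $\Prm_2$ by integrating against the order statistics $v_1,v_2,v_3$. Symmetry under relabelling, which is all you establish, shows only that the answer is a symmetric functional of the joint error law on $\Ic$; it does not by itself force $\E{\binom{W}{I_1}}=\binom{I}{I_1}p_0^{I_1}$, since a symmetric function of dependent, non-identically distributed indicators need not take the binomial form. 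You correctly flag that one must confirm the positions in $\Ic$ are statistically equivalent, but that confirmation is the entire content of the lemma, and your proposal defers it rather than supplying it.
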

\begin{proof}
  The ordering of the chosen $I$ bits given by the set $\Ic$ is irrelevant
  since \emph{all} subsets $\Ic_1$ of $I_1$ errors within the $I$ bits $\Ic$
  are considered. Consequently, the bit errors in the set $\Ic$ can be
  considered to be independent having the equal probability denoted as $p_0$.
\end{proof}
Using \lmref{lm:2}, we observe that the lists of error patterns \eref{eq:15}
and \eref{eq:16} are constructed, so that the ordering of bits within the
segments is irrelevant. Then, the bit errors in a given segment can be
considered to be conditionally independent. This observation is formulated as
the following corollary of \lmref{lm:2}.
\begin{corollary}\label{cl:1}
  For the $\OSD(I)$ and the list of error patterns \eref{eq:15}, the bit errors
  in the MRIPs can be considered as conditionally independent. Similarly, for
  the $\POSD(I_1,I_2)$ and the list of error patterns \eref{eq:16}, the bit
  errors in the two segments can be considered as conditionally independent.
\end{corollary}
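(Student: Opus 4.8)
The plan is to derive the corollary directly from \lmref{lm:2} by identifying, in each decoding scheme, the block of bit positions $\Ic$ to which the lemma applies, and then verifying that the corresponding list construction realizes \emph{all} weight-$I_1$ subsets $\Ic_1\subseteq\Ic$ for every admissible $I_1$. Once this is checked, \lmref{lm:2} immediately supplies the aggregate probability $\binom{I}{I_1}p_0^{I_1}$, which is exactly the total probability of $I_1$ errors that $|\Ic|$ conditionally independent bit errors of common probability $p_0$ would produce; this is the precise sense in which the errors ``can be considered'' independent.

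First I would treat the $\OSD(I)$ case. I would set $\Ic=\{1,2,\cdots,K\}$ equal to the MRIPs and observe that the list \eref{eq:15} contains, for every $l$ with $0\leq l\leq I$, \emph{all} $\binom{K}{l}$ error patterns $\ev\in\Z_2^K$ with $\wH{\ev}=l$. Hence, for each weight $I_1\leq I$, the family of index sets $\{\Ic_1\subseteq\Ic:\ |\Ic_1|=I_1\}$ indexing the weight-$I_1$ patterns in \eref{eq:15} is precisely the family to which \lmref{lm:2} refers. Applying \lmref{lm:2} with this $\Ic$ then shows that summing the joint error probabilities $\Prm(\Ic_1)$ over all weight-$I_1$ subsets collapses to $\binom{K}{I_1}p_0^{I_1}$, so the bit errors in the MRIPs may be treated as conditionally independent with the common probability $p_0$.

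Second, for $\POSD(I_1,I_2)$ I would apply the same argument to each segment separately. Taking $\Ic^{(1)}=\{1,\cdots,K_1\}$ and noting that $\ESi$ in \eref{eq:16} contains every weight-$l$ pattern supported on the first segment for $0\leq l\leq I_1$, \lmref{lm:2} applies to $\Ic^{(1)}$ and yields independence with some common probability $p_0^{(1)}$; taking $\Ic^{(2)}=\{K_1+1,\cdots,K\}$ and using $\ESii$ gives independence within the second segment with a generally different common probability $p_0^{(2)}$. Because the two segments are disjoint and are decoded independently, the conclusion extends to the union $\ES_L=\ESi\cup\ESii$ of \eref{eq:18}.

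The main obstacle is conceptual rather than computational: the bit errors in the ordered positions are genuinely \emph{dependent}, since \thref{th:1} shows the marginal error probabilities are monotone in the reliability ordering and \lmref{lm:1} shows the joint probabilities are subordinate to their marginals, so the true joint law does not factorize. The claim is therefore not that this law is a product measure, but that the only aggregate quantities entering the list-membership probability $\Prob{\cvh_0\in\ES_L}$ of \eref{eq:61}—namely the sums of $\Prm(\Ic_1)$ over all same-weight subsets within a segment—coincide with those of an independent model with effective per-bit probability $p_0$. Making this equivalence explicit, and isolating exactly which sums the performance analysis will actually require, is the delicate part; once it is clear that the symmetric all-subsets construction of \eref{eq:15} and \eref{eq:16} is precisely what erases the ordering dependence inside each segment, the corollary follows at once from \lmref{lm:2}.
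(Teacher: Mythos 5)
Your argument is correct and follows essentially the same route as the paper: the paper's justification is precisely the observation that the lists \eref{eq:15} and \eref{eq:16} contain \emph{all} weight-$l$ patterns within each segment for every admissible $l$, so \lmref{lm:2} applies segment-wise and the ordering inside a segment becomes irrelevant, yielding the effective independent model with per-segment probability $p_0$. Your additional remarks (that the true joint law does not factorize, and that only the symmetric all-subset sums enter $\Prob{\cvh_0\in\ES_L}$) make explicit what the paper leaves implicit, but do not change the argument.
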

Thus, the bit errors in \clref{cl:1} are independent conditioned on the
particular segment being considered as shown next.

Let $\Prm_0$ be the bit error probability of the MRIPs for the $\OSD(I)$
decoding. Similarly, let $\Prm_1$ and $\Prm_2$ be the bit error probabilities
in the first and the second segments of the $\OSD(I_1,I_2)$ decoding,
respectively. Denote the auxiliary variables, $v_1=|\rt_{K_1}|$,
$v_2=|\rt_{K_1+1}|$, and $v_3=|\rt_{K+1}|$ of the order statistics \eref{eq:3},
and let $u\equiv|r_i|$, $i=1,2,\cdots,K$. Hence, always, $v_1\geq v_2$, and,
for simplicity, ignoring the second permutation $\lambdapp$, also, $v_2\geq
v_3$. The probability of bit error $\Prm_0$ for the MRIPs is calculated as,
\begin{equation*}
  \Prm_0= \EE{u}{\int_0^u \frac{f_{v_3}(v)}{1-F_u(v)}\df v}
\end{equation*}
where $\EE{u}{\cdot}$ denotes the expectation w.r.t. (with respect to) $u$,
$f_{v_3}(v)$ is the PDF of the $(K+1)$-th order statistic in \eref{eq:3}, and
$F_u(v)$ is the cumulative distribution function (CDF) of the magnitude (the
absolute value) of the reliability of the received bits (before
ordering). Similarly, the probability of bit error $\Prm_1$ for the first
segment is calculated as,
\begin{equation*}
  \Prm_1= \EE{u}{\int_0^u \frac{f_{v_2}(v)}{1-F_u(v)}\df v}
\end{equation*}
where $f_{v_2}(v)$ is the PDF of the $(K_1+1)$-th order statistic in
\eref{eq:3}. The probability of bit error $\Prm_2$ for the second segment is
calculated as,
\begin{equation*}
  \Prm_2= \EE{u}{\int_0^u \!\int_u^\infty \frac{f_{v_1}(v)
      f_{v_3}(v^\prime)}{(F_u(v)- F_u(v^\prime))(1-F_{v_1}(v^\prime))}
    \df v \df v^\prime}
\end{equation*}
where $f_{v_1}(v)$ and $F_{v_1}(v^\prime)$ is the PDF and the CDF of the
$K_1$-th order statistic in \eref{eq:3}, respectively. The values of the
probabilities $\Prm_0$, $\Prm_1$ and $\Prm_2$ have to be evaluated by numerical
integration. Finally, we use \lmref{lm:2} and substitute the probabilities
$\Prm_0$, $\Prm_1$ and $\Prm_2$ for $p_0$ to calculate the probability
$\Prob{\cvh_0\in\ES_L}$ of the error patterns in the list $\ES_L$.

\section{Numerical Examples}

We use computer simulations to compare the BER performances of the proposed
soft-decision decoding schemes. All the block codes considered are linear and
systematic.

The BER of the $(31,16,7)$ BCH code over an AWGN channel is shown in
\fref{Fig:2} assuming $\ISD(2)$ and $\ISD(3)$ with $K=16$ having $137$ and
$697$ test error patterns, respectively, and assuming $\POSD(1,3)$ and
$\POSD(2,3)$ with $K_1=6$ and $K_2=10$ having $183$ and $198$ test error
patterns, respectively. We observe that $\POSD(1,3)$ achieves the same BER as
$\ISD(3)$ while using much less error patterns which represents the gain of the
ordering of the received information bits into two segments. At the BER of
$10^{-4}$, $\POSD(1,3)$ outperforms $\ISD(2)$ by $1.1$ dB using approximately
$50\%$ more test error patterns. Thus, the $\POSD(1,3)$ decoding provides $2.3$
dB coding gain with the small implementation complexity at the expense of $2$
dB loss compared to the ML decoding.

\fref{Fig:4} shows the BER of the $(63,45,14)$ BCH code over an AWGN
channel. The number of test error patterns for the $\ISD(2)$, $\ISD(3)$,
$\POSD(1,3)$ and $\OSD(2)$ decodings are $1036$, $15226$, $5503$ and $1036$,
respectively. We observe from \fref{Fig:4} that $\ISD(3)$ has the same BER as
$\POSD(1,3)$ with two segments of $K_1=13$ and $K_2=32$ bits. However,
especially for the high rate codes (i.e. of rates greater than $1/2$), one has
to also consider the complexity of the Gauss elimination to obtain the row
echelon form of the generator matrix for the OSD. For example, the Gauss
elimination for the $(63,45,14)$ code requires approximately $20,400$ BOPS;
cf. \tbref{Tab:1}.

The BER of the $(128,64,22)$ BCH code over an AWGN channel is shown in
\fref{Fig:1} assuming $\OSD(1)$ and $\OSD(2)$ with $K=64$, and assuming
$\OSD(2,2)$ with $K_1=21$ and $K_2=43$. The number of test error patterns for
the $\OSD(1)$, $\OSD(2)$ and $\OSD(2,2)$ decodings are $64$, $2081$ and $1179$.
A truncated union bound of the BER in \fref{Fig:1} is used to indicate the ML
performance \cite[Ch. 10]{Ben99}. We observe that both $\OSD(2)$ and
$\OSD(2,2)$ have the same BER performance for the BER values larger than
$10^{-3}$, and $OSD(2)$ outperforms $\OSD(2,2)$ by at most $0.5$ dB for the
small values of the SNR. Our numerical results indicate that, in general,
$\OSD(2,2)$ decoding can achieve approximately the same BER as $\OSD(2)$ for
small to medium SNR while using about $50\%$ less error patterns. Thus, a
slightly smaller coding gain (less than $0.5$dB) of $\OSD(2,2)$ in comparison
with $\OSD(2)$ at larger values of the SNR is well-compensated for by the
reduced decoding complexity. More importantly, $\OSD(2,2)$ can trade-off the
BER performance and the decoding complexity between those provided by $\OSD(1)$
and $\OSD(2)$, especially at larger values of SNR.

The BER of the $(31,16,7)$ BCH code over a fast Rayleigh fading channel is
shown in \fref{Fig:3}. We assume the same decoding schemes as in \fref{Fig:2}.
The $\POSD(1,3)$ decoding with $183$ error patterns achieves the coding gain of
$17$ dB over an uncoded system, the coding gain of $4$ dB over $\ISD(2)$ with
$137$ error patterns, and it has the same BER as $\OSD(3)$ with $697$ error
patterns. The BER of the high rate BCH code $(64,57,4)$ over a fast Rayleigh
channel is shown in \fref{Fig:9}. In this case, the number of test error
patterns for the $\ISD(2)$, $\ISD(3)$, $\POSD(2,3)$ and $\OSD(2)$ decoding is
$1654$, $30914$, $8685$ and $1654$, respectively. We observe that, for small to
medium SNR, $\POSD(2,3)$ which does not require the Gauss elimination
(corresponding to approximately $3,000$ BOPS) outperforms $\OSD(2)$ by $1$dB
whereas, for large SNR values, these two decoding schemes achieve approximately
the same BER performance.

\section{Conclusions}

Low-complexity soft-decision decoding techniques employing a list of the test
error patterns for linear binary block codes of small to medium block length
were investigated. The optimum and suboptimum construction of the list of
error patterns was developed. Some properties of the joint probability of error
of the received bits after ordering were derived. The original OSD algorithm
was generalized by assuming a segmentation of the MRIPs. The segmentation of
the MRIPs was shown to overcome several drawbacks of the original OSD and to
enable flexibility for devising new decoding strategies. The decoding
complexity of the OSD-based decoding algorithms was reduced further by avoiding
the Gauss (or the Gauss Jordan) elimination using the partial ordering of the
received bits in the POSD decoding. The performance analysis was concerned with
the problem of finding the probability of the test error patterns contained in
the decoding list. The BER performance and the decoding complexity of the
proposed decoding techniques were compared by extensive computer
simulations. Numerical examples demonstrated excellent flexibility of the
proposed decoding schemes to trade-off the BER performance and the decoding
complexity. In some cases, both the BER performance as well as the decoding
complexity of the segmentation-based OSD were found to be improved compared to
the original OSD.

\section*{Appendix}

We derive the probabilities $\Prm_0$, $\Prm_1$ and $\Prm_2$ in Section V.
Without loss of generality, we assume that the all-ones codeword was
transmitted, i.e., $x_i=-1$ for $\forall i$. Then, after ordering, the $i$-th
received bit, $i=1,2,\cdots,N$, is in error, provided that $\rt_i>0$. The
probability of bit error $\Prm_0$ for the MRIPs is obtained as,
\begin{equation*}
  \Prm_0= \int_0^\infty\!\int_0^\infty f_u(u| u\geq v_3) f_{v_3}(v_3) \df v_3
  \df u
\end{equation*}
where the conditional PDF \cite{Papoulis},
\begin{equation*}
  f_u(u| u\geq v_3)= \left\{\begin{array}{cc} \frac{f_u(u)}{1-F_u(v_3)}& u\geq
      v_3\\ 0 & u<v_3\end{array}\right.
\end{equation*}
and $f_u(u)$ and $F_u(v_3)$ are the PDF and the CDF of the reliability of the
received bits, respectively, so that,
\begin{equation*}
  \Prm_0= \int_0^\infty\! f_u(u) \int_0^u \frac{f_{v_3}(v_3)}{1-F_u(v_3)} 
  \df v_3 \df u.
\end{equation*}

Similarly, the probability of bit error $\Prm_1$ for the first segment is
calculated as,
\begin{eqnarray*}
  \Prm_1&=& \int_0^\infty \int_0^\infty f_u(u| u\geq v_2) f_{v_2}(v_2)
  \df v_2 \df u \\
  &=& \int_0^\infty\! f_u(u) \int_0^u \frac{f_{v_2}(v_2)}{1-F_u(v_2)} 
  \df v_2 \df u.
\end{eqnarray*}

The probability of bit error $\Prm_2$ for the second segment is calculated as,
\begin{equation*}
  \Prm_2= \int_0^\infty \int_0^\infty\!\int_0^\infty 
  f_u(u| v_1\geq u\geq v_3) f_{v_1,v_3}(v_1,v_3) \df v_1 \df v_3 \df u
\end{equation*}
where the conditional PDF,
\begin{equation*}
  f_u(u| v_1\geq u\geq v_3)= \left\{\begin{array}{cc} \frac{f_u(u)}{F_u(v_1)-
        F_u(v_3)}& v_1\geq u\geq v_3\\ 0& \mbox{otherwise}\end{array}\right.
\end{equation*}
and the joint PDF of the order statistics $v_1\geq v_3$ is,
\begin{equation*}
  f_{v_1,v_3}(v_1,v_3)= \left\{\begin{array}{cc} \frac{f_{v_1}(v_1)}{
        1-F_{v_1}(v_3)} f_{v_3}(v_3)& v_1\geq v_3\\
      0& v_1<v_3\end{array}\right.
\end{equation*}
and thus,
\begin{equation*}
  \Prm_2= \int_0^\infty f_u(u) \int_0^u \!\int_u^\infty \frac{f_{v_1}(v)
    f_{v_3}(v^\prime)}{(F_u(v)- F_u(v^\prime))(1-F_{v_1}(v^\prime))}
  \df v \df v^\prime \df u.
\end{equation*}

\bibliographystyle{IEEEbib}

\newpage

\begin{table}[t!]
  \begin{center}  
    \caption{Implementation Complexity of the OSD and the POSD}
    \label{Tab:1}
    \begin{tabular}{|c|c|}
      \hline\multicolumn{2}{|c|}{$\OSD(I_1)$ and $\OSD(I_1,I_2)$}\\\hline      
      operation& complexity\\\hline
      $\rv$ & $2N$ FLOPS\\
      $\rvt^\prime$ & $N\log_2(N)$ FLOPS\\
      Gauss el. $\Gv^\prime$& $N\min(K,N-K)^2$ BOPS\\
      $\rvt^\pprime$& $K+K(N-K)$ BOPS\\\hline
      \hline \multicolumn{2}{|c|}{$\POSD(I_1)\equiv\ISD(I_1)$}\\\hline
      operation& complexity\\\hline
      $\rv$ & $2N$ FLOPS\\
      $\rvt^\prime$ & $0$ BOPS\\\hline
      \hline\multicolumn{2}{|c|}{$\POSD(I_1,I_2)$}\\\hline 
      operation& complexity\\\hline
      $\rv$ & $2N$ FLOPS\\
      $\rvt^\prime$ & $K\log_2(K)$ FLOPS\\\hline
    \end{tabular}      
  \end{center}
\end{table}

\begin{table}[t!]
  \begin{center}  
    \caption{Decoding List Sizes for the OSD and the POSD}
    \label{Tab:2}
    \begin{tabular}{|c|c|}\hline
      $\OSD(I)$ \B& $\sum_{l=0}^{I}\binom{K}{l}$ \\\hline
      $\OSD(I_1,I_2)$ \B& $\sum_{l=0}^{I_1}\binom{K_1}{l}+
      \sum_{l=0}^{I_2}\binom{K_2}{l}$  \\\hline
      $\POSD(I)\equiv\ISD(I)$\B& $\sum_{l=0}^{I}\binom{K}{l}$\\\hline
      $\POSD(I_1,I_2)$ \B& $\sum_{l=0}^{I_1}\binom{K_1}{l}+
      \sum_{l=0}^{I_2}\binom{K_2}{l}$ \\\hline
    \end{tabular}      
  \end{center}
\end{table}

\begin{figure}[p]
  \begin{center}
    \includegraphics[scale=1.3]{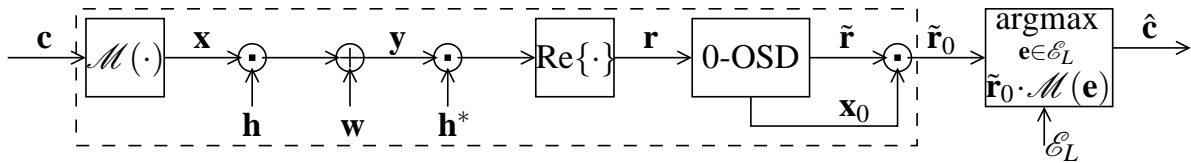}
    \caption{The system model and an equivalent vector channel with the binary
      vector input $\cv$ and the vector soft-output $\rvt_0$.}
    \label{Fig:0}
  \end{center}
\end{figure}

\begin{figure}[p]
  \begin{center}
    \includegraphics[scale=1.0]{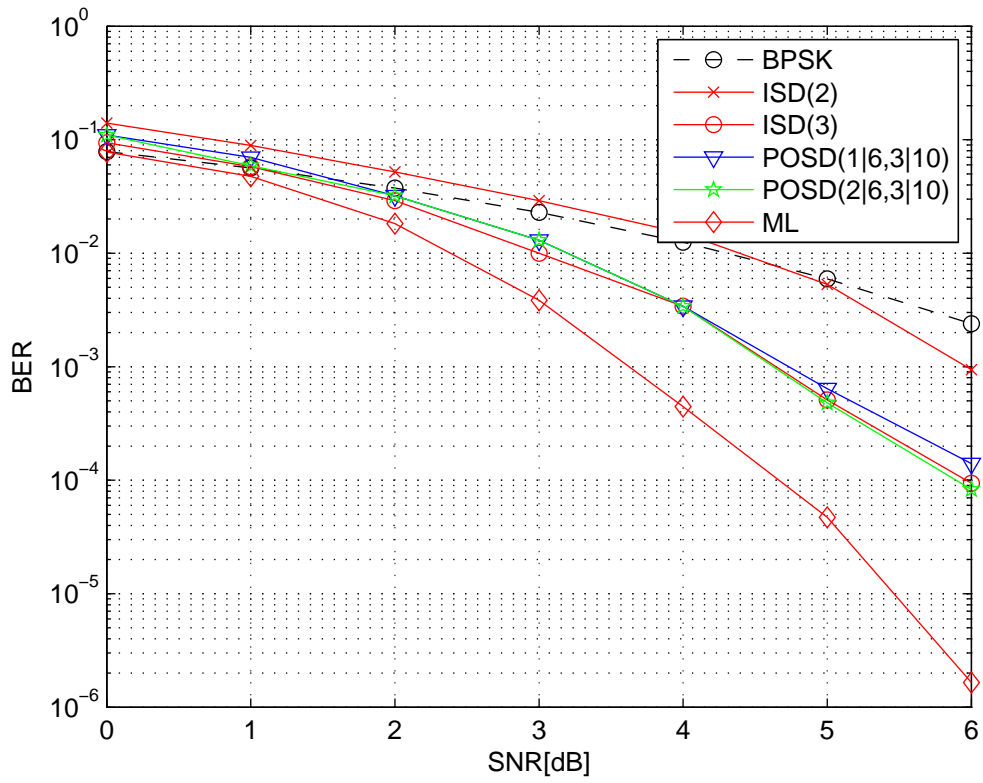}
    \caption{The BER of the $(31,16,7)$ BCH code over an AWGN channel.}
    \label{Fig:2}
  \end{center}
\end{figure}

\begin{figure}[p]
  \begin{center}
    \includegraphics[scale=1.0]{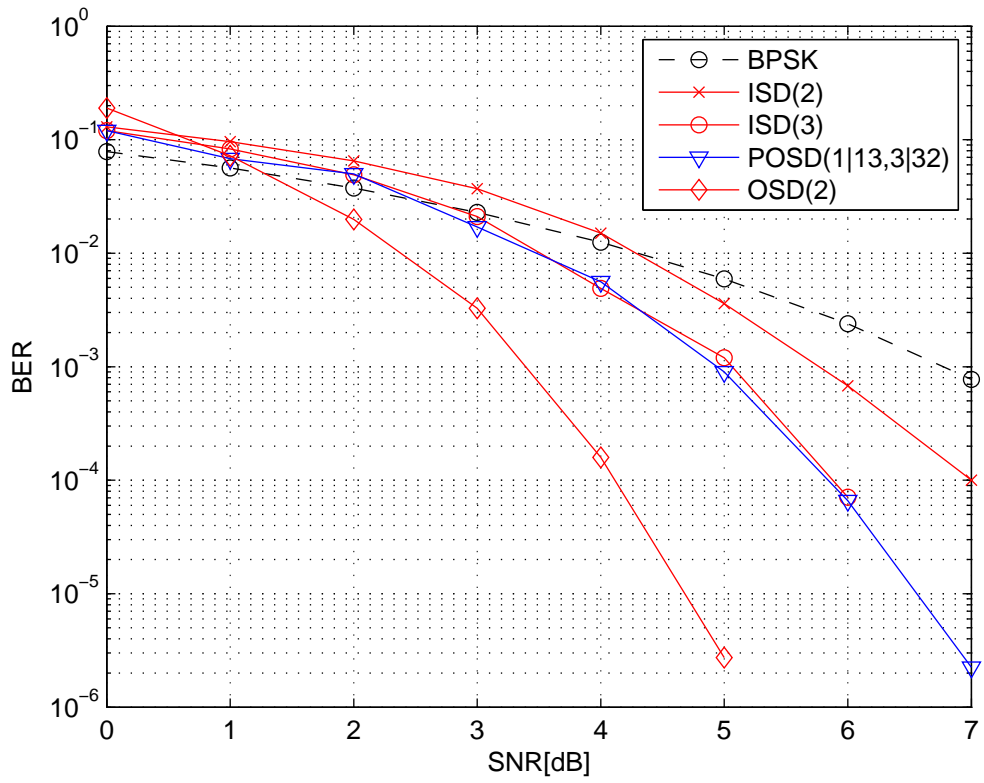}
    \caption{The BER of the $(63,45,14)$ BCH code over an AWGN channel.}
    \label{Fig:4}
  \end{center}
\end{figure}

\begin{figure}[p]
  \begin{center}
    \includegraphics[scale=1.0]{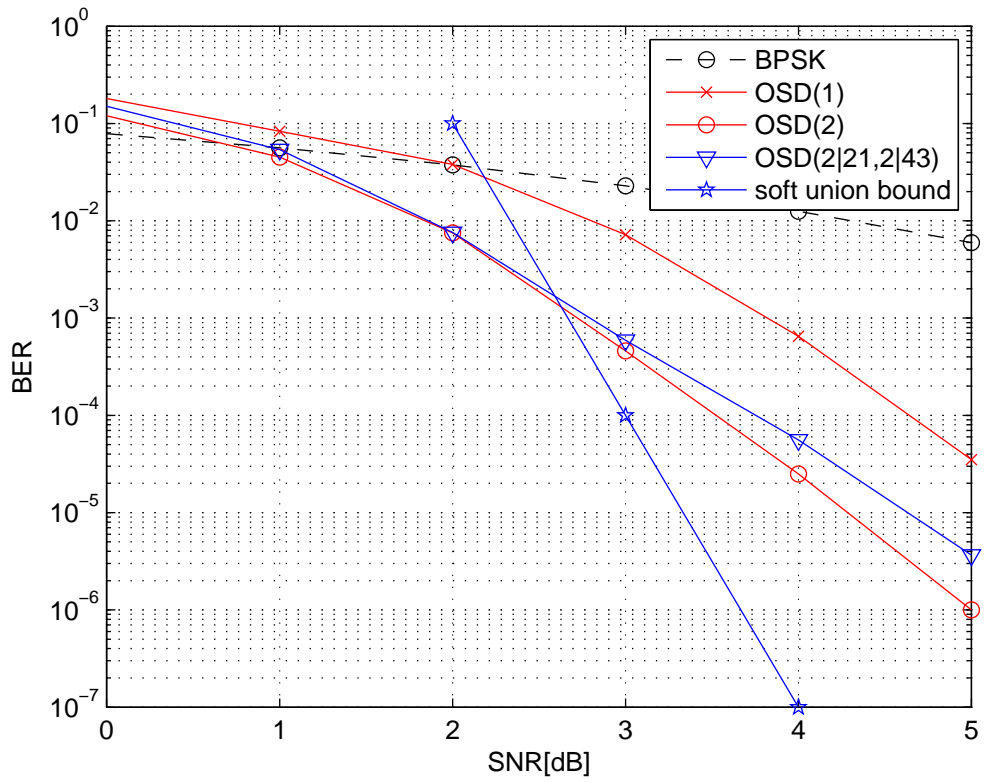}
    \caption{The BER of the $(128,64,22)$ BCH code over an AWGN channel.}
    \label{Fig:1}
  \end{center}
\end{figure}

\begin{figure}[p]
  \begin{center}
    \includegraphics[scale=0.9]{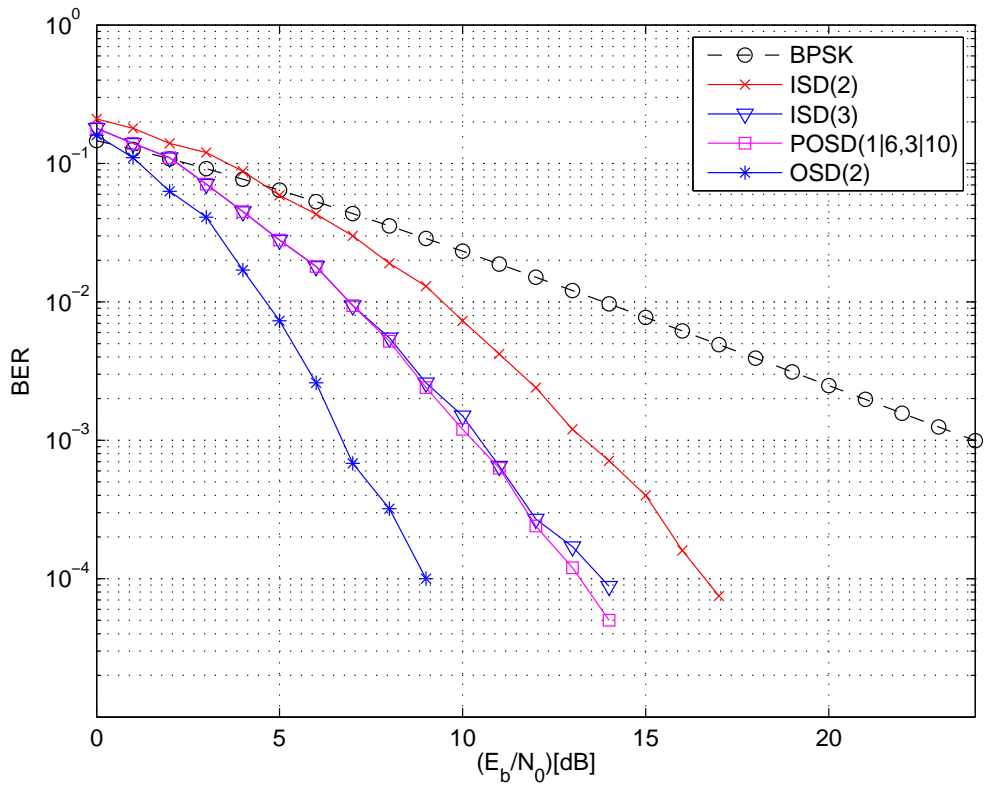}
    \caption{The BER of the $(31,16,7)$ BCH code over a Rayleigh fading
      channel.}
    \label{Fig:3}
  \end{center}
\end{figure}

\begin{figure}[p]
  \begin{center}
    \includegraphics[scale=0.7]{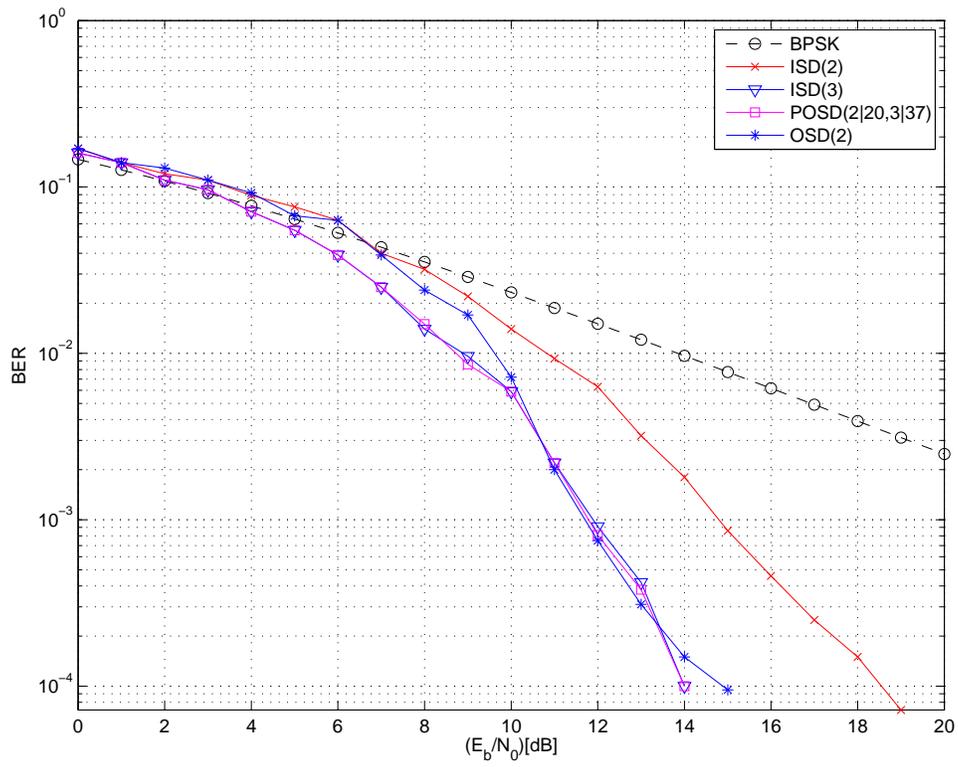}
    \caption{The BER of the $(64,57,4)$ BCH coded over a Rayleigh fading
      channel.}
    \label{Fig:9}
  \end{center}
\end{figure}

\end{document}